\documentclass[journal]{IEEEtran}
\usepackage{cite}

\usepackage{graphicx}
\usepackage{booktabs}
\usepackage{color}

\usepackage{amsmath}
\usepackage{amsthm}
\newtheorem{theorem}{\textit{Theorem}}
\newtheorem{definition}{\textit{Definition}}
\newtheorem{property}{\textit{Property}}
\newtheorem{corollary}{\textit{Corollary}}
\newtheorem{remark}{\textit{Remark}}

\usepackage{algorithm}
\usepackage{algorithmic}

\usepackage{array}

\ifCLASSOPTIONcompsoc
 \usepackage[caption=false,font=normalsize,labelfont=sf,textfont=sf]{subfig}
\else
 \usepackage[caption=false,font=footnotesize]{subfig}
\fi
\usepackage{fixltx2e}
\usepackage{dblfloatfix}

\ifCLASSOPTIONcaptionsoff
 \usepackage[nomarkers]{endfloat}
\let\MYoriglatexcaption\caption
\renewcommand{\caption}[2][\relax]{\MYoriglatexcaption[#2]{#2}}
\fi
\usepackage{url}

\begin{document}
%
\title{Electric Vehicle Charging Right Trading: Concept, Mechanism, and Methodology}
%
%
%

\author{Ruike Lyu,~\IEEEmembership{Student Member,~IEEE}, Yuxuan Gu,~\IEEEmembership{Student Member,~IEEE}, and Qixin Chen,~\IEEEmembership{Senior Member,~IEEE}} 
\maketitle

\begin{abstract}
  With the increasing penetration of electric vehicles (EVs),  uncoordinated EV charging and the resulting chaos, disorder, and long waiting times at EV charging stations (EVCSs) will no longer be tolerable. An EV charging right (CR) is the right to reserve a predefined charging service. By purchasing CRs, EVs can reduce their charging waiting time, and the price of CRs can guide EVs toward optimized charging behaviors. In this article, we define CR, propose the CR trading mechanism (CRM), and analyze the effect of CRM on reducing waiting times and mitigating congestion in EV charging. In the proposed CRM, EVs can purchase CRs in advance, and the CRs are used to estimate the waiting time and update the price of charging. Queue theory is utilized in the waiting time estimation, in which the impact of disclosing queue states at EVCSs is considered for the first time. The simulation results verify the accuracy of the waiting time estimation and the effect of the proposed mechanism.
\end{abstract}

\begin{IEEEkeywords}
  Electric vehicle, charging right pricing, charging management, queue theory, external cost.
\end{IEEEkeywords}

%
\IEEEpeerreviewmaketitle
\section*{Nomenclature}
\addcontentsline{toc}{section}{Nomenclature}
\subsection*{Abbreviations}

\begin{IEEEdescription}[\IEEEusemathlabelsep\IEEEsetlabelwidth{$k/K/\mathcal{K}$}]
  
\item[EV] Electric vehicle.
\item[EVCS] Electric vehicle charging station.
\item[PFC] Public fast charging. 
\item[CNO] Charging network operator.
\item[CR] Charging right.
\item[CRP] The charging right platform.
\item[CRM] The charging right trading mechanism.
\item[NENP] The charging management scheme with no waiting time estimation nor dynamic pricing.
\item[TENP] The charging management scheme with waiting time estimation but no dynamic pricing.
\item[SD] Scheduled delay.
\item[OS] The original charging system.
\item[IS] The imaginary charging system.
\item[PDN] Power distribution network.
\item[UTN] Urban transportation network.

\end{IEEEdescription}

\subsection*{Notations}
\begin{IEEEdescription}[\IEEEusemathlabelsep\IEEEsetlabelwidth{$k/K/\mathcal{K}$}]
  
  \item[$i/I/\mathcal{I}$] Index/number/set of EVs.
  \item[$j/J/\mathcal{J}$] Index/number/set of timeframes.
  \item[$k/K/\mathcal{K}$] Index/number/set of EVCSs.
  \item[$j^{0}$] Preferred timeframe for an EV to get charged.
  \item[$j^{*}$] Optimal timeframe for an EV to get charged.
  \item[$k^{*}$] Optimal EVCS for an EV to get charged.
  \item[$crid$] Identification of a CR.
  \item[$usrid$] Identification of an EV.
  \item[$info$] Relevant information of a CR.
  \item[$e$] Energy amount of a CR.
  \item[$e^{\rm{char}}$] Already charged energy amount of a CR.
  \item[$p(j)$] Unit price of charging in timeframe $j$.
  \item[$w(j)$] Estimated waiting time in timeframe $j$.
  \item[$U(\cdot)$] Loss function of an EV user.
  \item[$\beta^{\rm{TT}}$] User coefficient of total time.
  \item[$\beta^{\rm{SDE}}$] User coefficient of SD early.
  \item[$\beta^{\rm{SDL}}$] User coefficient of SD late.
  \item[$l^{\rm{o}}$] Trip origin of an EV.
  \item[$l^{\rm{d}}$] Trip destination of an EV.
  \item[$l^{\rm{c}}(k)$] Location of EVCS $k$.
  \item[$t^{\rm{curr}}$] Current time.
  \item[$t^{\rm{d}}$] Departure time of an EV.
  \item[$t^{\rm{f}}$] Time for an EV to finish charging.
  \item[$t^{\rm{a}}(k)$] Arrival time of an EV at EVCS $k$.
  \item[$t^{\rm{oc}}(k)$] Traveling time from the trip origin of an EV to EVCS $k$.
  \item[$t^{\rm{cd}}(k)$] Traveling time from EVCS $k$ to the trip destination of an EV.
  \item[$t^{\rm{avail}}(k)$] Time for an available charging pile at EVCS $k$.
  \item[$t^{\rm{w}}(k)$] Waiting time of an EV at EVCS $k$.
  \item[$c$] Number of charging piles.
  \item[$r$] Operating rate of a charging pile.
  \item[$\rm{queue}$] Queue state of an EVCS.
  \item[$\rho$] Service intensity of the charging system.
  \item[$\lambda$] Arrival rate of the charging system.
  \item[$\tau$] Mean charging time of the charging system.
  \item[$\sigma^2_{\rm{a}}$] Relative squared coefficient of variation for interarrival time.
  \item[$\sigma^2_{\rm{c}}$] Relative squared coefficient of variation for charging time.
  \item[$W$] Total time cost of the EV users in a timeframe.
  
  \item[$\Delta t$] Duration of a timeframe.
  \item[$E$] Mean number of EVs generated in a timeframe.
  \item[$\delta z$] A small change of variable $z$.
  \item[$|Z|$] Cardinal of set $Z$.

  \end{IEEEdescription}

\section{Introduction}
%
%
%
%
\IEEEPARstart{T}{he} number of electric vehicles (EVs) is increasing substantially worldwide, resulting a growing demand for EV charging. The most important locations for EV charging are home, work, and then public locations \cite{hardman_review_2018}, in which public charging, although not the most preferred currently, is receiving increasing popularity \cite{morrissey_future_2016}. In China, approximately 80\% of EVs will be charged at public EV charging stations (EVCSs) by 2030, where fast charging will play an important role \cite{mckinsey-report}. However, typical EV charging time, even if for rapid charging, is much longer than refueling gasoline vehicles. For example, operating at 90 kW, it takes 33 minutes for the NIO ec6 model\cite{ec6} to charge the 100 kWh battery pack to half capacity. In addition, uncoordinated charging and the consequent long waiting time could lead to even more loss of driving comfort and social welfare \cite{IARS}. It can be foreseen that limited charging piles and dramatically increasing EVs will aggravate the problem of charging waiting time, bringing challenges for charging management.

For the smart grid, EV charging impacts on load capacity, power quality, economy, and environment are major concerns \cite{shaukat_survey_2018}. In recent studies, minimizing energy costs while avoiding transformer overloads \cite{silva_coordination_2020}, the day-ahead optimal reserve management problem \cite{liu_optimal_2021}, and charging management providing demand response \cite{sadeghianpourhamami_definition_2020, li_constrained_2020} remain topics of interest. However, from EV users' point of view, range anxiety, long charging times, and inconvenient and insufficient charging infrastructure are more concerning\cite{shen_optimization_2019}. This is true especially for public fast charging (PFC), which is more closely related to EV users’ daily activities, as the time spent charging will affect the subsequent activities \cite{utility-2007, utility-2017}. Unfortunately, PFC EVCSs are also where congestions usually take place, which could mean hours of waiting time. Existing studies concerning the waiting time in EV charging mainly focus on EVCS selection and EV routing, in the context of PFC \cite{elghitani_efficient_2021,moradipari_pricing_2020,tucker_online_2020, MWT,trip-duration-2018,subscribe-2017,reservation-2020,Wei-2018, Cui-2021, enhance-2020, MS-2021,charging-navigation}.

Reference \cite{MWT} first addressed the driving comfort issue in EV charging scheduling,
and the basic solution is to select and reserve the EVCSs with minimum
waiting time for EVs. Reference \cite{trip-duration-2018} introduced
traffic conditions into EVCS selection to minimize the total time
(including traveling time on the road and waiting time at the EVCSs)
of EV users. 
In \cite{MWT,trip-duration-2018}, the real-time
locations, speeds, arrival times, and energy demands of EVs are
assumed to be monitored and available for waiting time estimation and
EVCS selection. Similar assumptions and methodology can be seen in
\cite{subscribe-2017,reservation-2020}. We call this modeling
framework the real-time model. The simulation-based study in
\cite{subscribe-2017} suggests that when EVs select EVCSs based on the
real-time queue states (i.e., the time for the EVs at the EVCS to
finish charging) and make charging reservations accordingly, the queue
lengths of the EVCSs are more balanced, meaningfully reducing the
waiting time. However, the real-time model has difficulty in
estimating waiting times several hours in advance, as the queue states
are not available then. Furthermore, the analytical relationship 
between the average waiting time and other parameters, such as the
arrival rate of EVs, has not been captured in the real-time model, but
the relationship is needed for charging pricing \cite{pricing-2017}
and when considering the interdependence between the urban transportation network (UTN) and power distribution network (PDN) \cite{Wei-2017}.

A different modeling of EVCS waiting time is presented in
\cite{Wei-2018, MS-2021,charging-navigation, Cui-2021,enhance-2020},
where the steady-state distribution of traffic flows, instead of the
real-time states of EVs and EVCSs, is used to estimate the average
waiting time (we call the similar modeling the traffic flow model). In
the so-called traffic flow model, the average waiting time in an EVCS
is approximated by the Davidson function developed in
\cite{davidson1966flow} (as in \cite{Wei-2018, Cui-2021, enhance-2020,
  MS-2021}), or by the M/M/n queueing model (as in
\cite{charging-navigation}), formulated as a function of the EV flow
and the capacity of the EVCS. Utilizing the user equilibrium (UE)
model \cite{traffic-flow-1}, the estimated average waiting time of the
EVCSs can be leveraged to determine the distribution of the traffic
flow and enhance the coordinated operation of the PDN and UTN
\cite{enhance-2020}. In \cite{Wei-2018}, the electricity prices, route
selections, time cost and charging opportunities are encapsulated in a
convex traffic assignment problem. With the aim of reducing time and
electricity costs, Reference \cite{Cui-2021} models the optimal charging
pricing problem of EVCSs, where the charging demand is assumed to have
already been forecast, and the steady-state in a single period is considered. In \cite{MS-2021}, while determining the optimal charging price of the EVCSs, the charging demand is analyzed using a modified UE traffic assignment problem with elastic traveling demand and different charging prices. However, all these methods apply only to the steady state, and they rely on accurate EV charging forecast for the purpose of price-setting and decision-making \cite{al-ogaili_review_2019}. Furthermore, even though waiting time can be meaningfully reduced utilizing the real-time queue states~\cite{MWT,trip-duration-2018,subscribe-2017,reservation-2020}, the real-time queue states are not leveraged in the traffic flow model.

\begin{table}[!t]
  \renewcommand{\arraystretch}{1.0}
  \caption{Comparison of Relevant Literature}
  \label{tab-ref}
  \centering
  \setlength{\tabcolsep}{1mm}{
  \resizebox{1.0\columnwidth}{!}{
  \begin{tabular}{cccccc}
    \toprule
    Ref. & \begin{tabular}[c]{@{}c@{}}Waiting time\\ model\end{tabular} & \begin{tabular}[c]{@{}c@{}}Analytically\\ expressed\end{tabular} & \begin{tabular}[c]{@{}c@{}}Utilizing Real-time\\ queue state\end{tabular} & \begin{tabular}[c]{@{}c@{}}Available\\ in advance\end{tabular} & \begin{tabular}[c]{@{}c@{}}Forecast-\\ independent\end{tabular} \\ 
    \midrule
    \begin{tabular}[c]{@{}c@{}} \cite{MWT,trip-duration-2018,subscribe-2017,reservation-2020} \end{tabular}& \begin{tabular}[c]{@{}c@{}}Real-time\\ model\end{tabular} &  $\times$ & $ \surd $  &  $\times$ & $ \surd $  \\ 
    \midrule
    \begin{tabular}[c]{@{}c@{}} \cite{Wei-2018, MS-2021,charging-navigation, Cui-2021,enhance-2020} \end{tabular}& \begin{tabular}[c]{@{}c@{}}Traffic\\ flow model\end{tabular} & $ \surd $  &  $\times$ & $ \surd $  &  $\times$ \\ 
    \midrule
    \begin{tabular}[c]{@{}c@{}}CRM in timeframe\\ selection\end{tabular} & \begin{tabular}[c]{@{}c@{}}Traffic\\ flow model\end{tabular} & $ \surd $  & $ \surd $  & $ \surd $  & $ \surd $  \\ 
    \midrule
    \begin{tabular}[c]{@{}c@{}}CRM in EVCS\\ selection\end{tabular} & \begin{tabular}[c]{@{}c@{}}Real-time\\ model\end{tabular} &  $\times$ & $ \surd $  &  $\times$ & $ \surd $  \\ 
  \bottomrule
  
  \end{tabular}
  }}
\end{table}

Utilizing real-time information in EV charging management is an interesting trend \cite{mukherjee_review_2015}. At present, there are a few web and mobile-based applications (APPs)
for EV charging management, such as E Charge \cite{e-charge}, which
displays current number of EVs in each EVCS. With the application of more
reliable and privacy-preserving techniques, such as blockchains
\cite{privacy-preserving,2020Coordinating-blockchain}, the real-time
queue states of EVCSs may be available to EVs, making the assumptions and the
select-reserve method of the real-time model realistic. By reserving a
charging pile in the selected EVCS, an EV can avoid a long waiting
time. However, other EVs may suffer from longer waiting times due to
the reduced availability of charging piles after the reservation. We
argue that the waiting time should be modeled considering the
utilization of EVCS real-time queue states so that the external cost of the
reservation can be measured. Moreover, when viewing the reservation as a right, pricing can be introduced to allocate the rights to make reservations, especially when charging piles are scarce.

In this article, we define EV charging right (CR) and propose the CR trading mechanism (CRM) for EVCSs and EVs to participate in. First, the concept and mechanism design of CR trading is clarified: EVs can select the timeframe to get charged and buy CRs accordingly, which can be used to reserve the selected EVCSs. Then, we develop the framework for waiting time estimation and CR pricing to reflect the external cost of CRs. Finally, simulations are conducted to verify the accuracy of the proposed method for waiting time estimation and the effect of the proposed mechanism on time cost reduction.

A comparison of EV charging waiting time modeling in the literature is summarized in Table~\ref{tab-ref}. Different from existing research, we take into account the impact of utilizing EVCS queue states on the average waiting time of a charging system. We theoretically analyze the lower bound of the average waiting time, which, as we prove, can be approached under certain conditions. The average waiting time is then analytically expressed, formulated as a function of several system parameters, which can be obtained from the CRs, without relying on forecasts or specific user models. Moreover, the external cost of charging is integrated into the price of charging, which is based on the increase of average waiting time of the charging system.

Our contributions are as follows:
 
\begin{itemize}
  \item We model and analytically express the average waiting time in an EV charging system where the real-time queue states of the EVCSs are utilized by EVs to select the EVCS for charging.
  
  \item We introduce the concept of CR and the corresponding CRM, a charging management scheme aimed at reducing waiting time, where the forecast-independent waiting time estimation and price of charging are provided to EVs for decentralized charging planning.
  
  \item Our simulation in the case study shows the accuracy of the proposed methodology for waiting time estimation and the effect of the CRM on waiting time reduction.
\end{itemize} 

The remainder of this paper is organized as follows: Section~\ref{sec_mechanism} presents the concept of CRs and the corresponding mechanism for CR trading. Section~\ref{sec_model} presents the modeling and theoretical analysis of waiting time in EV charging. Section~\ref{sec_methodology} presents the framework for estimating waiting time and CR pricing in advance. Section~\ref{sec_case} presents the case study. Section~\ref{sec_conclusion} draws conclusions and prospects for future work.

\section{Design of Charging Right Trading Mechanism}\label{sec_mechanism}
In this section, the concept of CRs and the design of the CR trading mechanism (CRM) are introduced. We consider an urban PFC system, where EVCSs are deployed in the traffic network and EVs travel from their origins to destinations, being charged during the trip. All the EVCSs and EVs are accessible on the Internet, so that the estimated waiting time and charging price are available to them, and EVs can send requests to buy CRs or make reservations on the charging right platform (CRP).

\subsection{Consideration of entities in the CRM}
\paragraph{\textbf{Electric vehicle}}
EVs with charging demand are considered. The total charging cost of an
EV user includes a monetary cost and a time cost. Based on the price and waiting time queried from the CRP, EV users decide when and where to get charged and buy the corresponding CR in advance. The distance between the origins of EVs and the EVCSs is negligible compared with the maximum range of EVs (600 km for ec6 \cite{ec6}), and the energy consumption on the road can be neglected, as assumed in \cite{Wei-2018,Cui-2021,MS-2021}. We also assume that each EV has its own agent, such as mobile-based APPs, so the timeframe and the EVCS for charging are rationally selected. Hereafter, we use EV user and EV interchangeably.

\paragraph{\textbf{Electric vehicle charging station}}
The main types of EVCSs include EVCSs for PFC, battery swapping, and wireless charging \cite{ding_technical_2020}. In the CRM, we consider PFC EVCSs for commercial purposes, where EVs pay for charging according to the energy amount and charging time. Although the EVCSs are possibly owned by different companies, we assume that they are managed by a common charging network operator (CNO) as in \cite{Cui-2021, moradipari_pricing_2020}, instead of setting prices independently to compete with each other as in \cite{MS-2021}. Moreover, the EVCS queue states are monitored and available to EV users, as assumed in \cite{MWT,trip-duration-2018,subscribe-2017,reservation-2020}. The assumptions are based on the observation that increasing numbers of EVCSs have been accessed in standardized charging management platforms such as E Charge \cite{e-charge}, rather than operating on their own. 

Charging systems for domestic garages, apartment complexes, and commercial complexes are dissimilar to PFC systems in business models and charging modes \cite{rahman_review_2016}. In addition, EVCSs can be aggregated to participate in the electric power market and integrate renewable energy \cite{teng_technical_2020}. We leave the consideration of these different charging modes to future work.

\subsection{Concept and circulation of charging rights}

\begin{figure}[!t]
  \centering
    \includegraphics[width=3.0in]{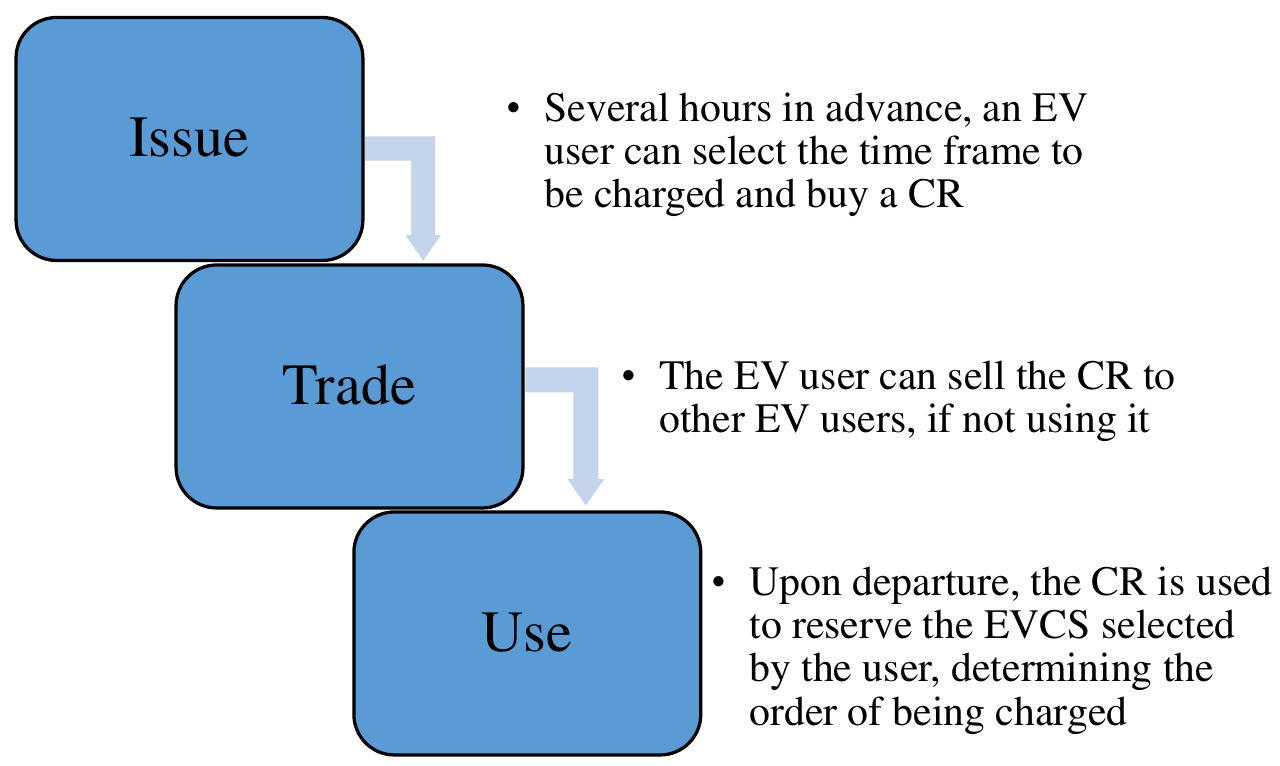}
  \caption{Circulation of a charging right.}
  \label{chap1-f2}
\end{figure}

A CR is the right to make a reservation and charge an EV with a given
amount of energy within a stipulated timeframe. The CR can be denoted
as ${\rm CR} = (crid, usrid, j, e, info)$, where $crid$ and $usrid$
denote the identification (id) of the CR and its owner, respectively,
$j$ is the timeframe in which the CR can be used to make a
reservation, $e$ is the amount of energy to be charged, and $info$
refers to other relevant information. When the CR is used, the
selected EVCS $k$ and the arrival time $t^{\rm{a}}$ at the EVCS are recorded as $info = (k, t^{\rm{a}})$. For example, the CR of $(cr1, usr1, 8:00-10:00, 30 \ {\rm kWh}, info1)$ is registered with an id of $cr1$ and can be used by $usr1$ to reserve an EVCS and get charged during any time between 8:00 and 10:00.

Fig.~\ref{chap1-f2} describes the circulation of a CR, which can be divided into three stages: issue, trade, and use. First, requested by an EV, the CRP issues the CR and sets it owned by the EV at the current public price (as presented in \ref{pricing}). Second, the CR owner can set a price for the CR and sells it to other EVs willing to buy one. Third, the CR is verified and used to charge an EV at any EVCS selected by the owner.

\begin{figure}[!t]
  \centering
    \includegraphics[width=3.0in]{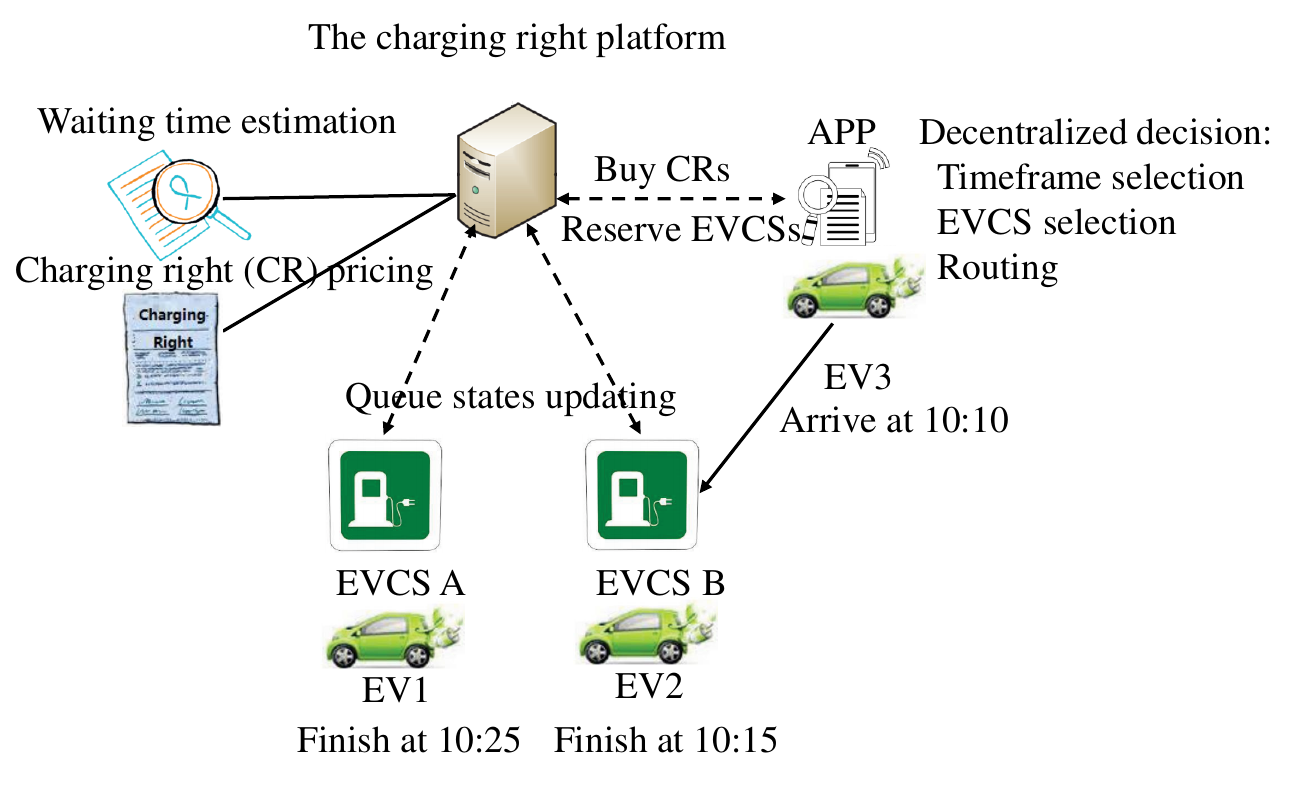}
  \caption{Overview of the CRM.}
  \label{chap1-f1}
\end{figure}

\subsection{Protocol of the CRM}\label{protocol}
The overview of the proposed CRM is shown in Fig.~\ref{chap1-f1}. The detailed protocol is presented as follows. Since we present the prototol in the view of an EV, the index of EVs $i$ is omitted for simplicity.

\paragraph{\textbf{Select the timeframe}}
A few hours in advance, an EV selects the timeframe $j^*$ to get
charged, for $j \in \mathcal{J} = \{1,...,J\}$. The unit charging
prices $p(j)$ and estimated waiting time $w(j)$ in future time
frames are acquired from the CRP. The user inputs the preferred timeframe for charging $j^0$ and the energy demand $e$ to the agent. The
optimal timeframe is then selected locally by the agent, as captured by:
\begin{equation}\label{select-time-frame}
  j^* = \underset{j \in \mathcal{J}}{\rm argmin} \ U(e \cdot p(j), w(j), (j-j^0)),
\end{equation}
which is a tradeoff based on the loss function $U( \cdot )$ that
describes the loss of user utility due to the price $e \cdot p(j)$,
the waiting time $w(j)$ and the scheduled delay (SD) $j-j^0$, based
on the assumption that users have their preferred departure time, and
any SD off that reduces their utility \cite{utility-2017}. For the
sake of discussion, in the case study, we take the linearized form of
$U( \cdot )$ developed in \cite{utility-2007}; however, the CRM is not restricted by specific user models. Let $\beta^{\rm{TT}}$, $\beta^{\rm{SDE}}$ and $\beta^{\rm{SDL}}$ represent the user coefficients of total time, SD early and SD late, respectively, which can be input by the user or set as default values. Thus we have:
\begin{equation}\label{utility-function}
  \begin{aligned}
    U = & e \cdot p(j) + \beta^{\rm{TT}} w(j) \\
        & + \beta^{\rm{SDE}}{\rm max} \{j^0-j, 0\} + \beta^{\rm{SDL}}{\rm max} \{j-j^0, 0\}.
  \end{aligned}
\end{equation}

The timeframe selection when buying CRs from other users differs from
the above, as the timeframe $j$ and amount of energy $e$ of the
existing CRs have already been stipulated, and buyers are likely to be
interested only in those CRs with proper $j$ and $e$ for them. 

\paragraph{\textbf{Buy the CR}}\label{buy-the-cr}
After timeframe selection, the EV sends a request to buy the corresponding CR on the CRP and pays for it. There are two ways of buying CRs, either from the CRP or from other users. 

When buying a CR from the CRP, the request $(usrid, j, e)$ for the selected timeframe $j$ and the amount of energy $e$ is sent to the CRP. ${\rm CR} = (crid, usrid, j, e, info)$ is then issued by the CRP, where $usrid$, $j$ and $e$ are from the request, $crid$ is generated, and $info$ is to be fulfilled later. The price of the CR is set according to the price mechanism presented in \ref{pricing}, and the real-time price and estimated waiting time are updated once the transaction is confirmed.

When buying a CR from other users, which has already been issued on
the CRP as described above, the price is set by the current owner, and the $usrid$ of the traded CR is changed to the new owner after the transaction is confirmed. 

\paragraph{\textbf{Sell the CR} (optional)}
The sale of owned CRs is allowed because in some instances, one may
have bought a CR and then canceled the trip plan, and thus selling the CR can reduce the loss. Moreover, the transfer of CRs can further improve social welfare, possibly when the public price has risen substantially since the CR was bought, thus making it profitable to sell the CR to other users with inelastic demands. The pricing strategy in the CR trading between users is beyond the scope of this article, and it is up to the owners to decide. 

\paragraph{\textbf{Select the EVCS}}
Any time in the timeframe stipulated by the CR, the EV acquires the
real-time queue states of the EVCSs from the CRP and selects an
optimal EVCS $k^*$, for $k \in \mathcal{K} = \{1,...,K\}$. First, the
user inputs the departure time $t^{\rm{d}}$, the trip origin $l^{\rm{o}}$, and
destination $l^{\rm{d}}$. Then, the estimated available time $t^{\rm{avail}}(k)$
for an idle charging pile (Algorithm~\ref{alg_ATE}), the charging rate
$r(k)$, and the locations of the EVCSs $l^{\rm{c}}(k)$ are acquired from
the CRP. The shortest route that links the origin, the EVCS, and the
destination can be formulated, and the corresponding traveling time
$t^{\rm{oc}}(k)$, $t^{\rm{cd}}(k)$ can be calculated. Although the vehicle routing problem is an NP-hard optimization problem \cite{erdelic_survey_2019}, the routing problem in EV charging based on real-time traffic data has been well studied in \cite{routing-1,routing-2}, and we assume that $t^{\rm{oc}}(k)$
and $t^{\rm{cd}}(k)$ are available to EVs (e.g., from general use navigation software). The famous Dijkstra algorithm is used in the case study. Thus, the estimated arrival time at EVCS $k$ is
\begin{equation}
  t^{\rm{a}}(k) = t^{\rm{d}} + t^{\rm{oc}}(k).
\end{equation}
We assume the EVCSs to be standardized such that the charging rates are identical to $r$. The charging time is simplified to $e/r$, which is the same for all EVCSs. Therefore, the charging time is not included in our discussion of waiting time hereafter. Upon arrival (if all the charging piles have been occupied), the EV has to wait until an idle charging pile appears at $t^{\rm{avail}}$, and therefore the waiting time $t^{\rm{w}}(k)$ at EVCS $k$ is 
\begin{equation}\label{waiting_time}
  t^{\rm{w}}(k) = {\rm max} \{t^{\rm{avail}}(k) - t^{\rm{a}}(k), 0\},
\end{equation}
The EVCS selection is captured by a tradeoff between the waiting time
at the EVCS and the traveling time on the road, with the aim of minimizing total time:
\begin{equation}\label{select-cs}
  k^* = \underset{k \in \mathcal{K}}{\rm argmin} \ t^{\rm{w}}(k) + t^{\rm{oc}}(k) + t^{\rm{cd}}(k).
\end{equation}
Note that in our mechanism, EVs select EVCSs in a distributed manner according to their own wishes, rather than the CNO assigning one for them (as in \cite{moradipari_pricing_2020, elghitani_efficient_2021}). For example (Fig.~\ref{chap1-f1}), the arrival time of EV3 and the queue states of EVCS A and B are used to select EVCS B for EV3. It is also possible that the user chooses to depart later until some EVCSs with less waiting time appear, but the CR can only be used within the timeframe. Note that when selecting the timeframe and the EVCS, the inputs from EVs are only used locally and are not revealed to the CRP.

\paragraph{\textbf{Make a reservation}}
For the selected EVCS, the EV sends a reservation request to the
CRP. The request $(crid, k, t^{\rm{a}})$ consists of the identity of the CR
$crid$, the selected EVCS $k$ and the estimated arrival time
$t^{\rm{a}}$. Then, $(k, t^{\rm{a}})$ is recorded in the $info$ of the CR as a reservation, which is used to determine the order of charging, update the queue state of EVCS $k$, and further update the available time $t^{\rm{avail}}(k)$ by Algorithm~\ref{alg_ATE}. 

\paragraph{\textbf{Get charged at the EVCS}}
The CR is verified at the EVCS, and the EV eventually gets
charged. Upon arrival, the EVCS verifies that the current time is
within the timeframe, the user matches the owner, and the reserved
EVCS matches the EVCS, which are stipulated by the CR. This can be
done through cryptography techniques as in
\cite{privacy-preserving}. Moreover, malicious reservations can be
identified by comparing the arrival time in the reservation and the
current time. If valid, the EV is charged the energy $e$ in the CR. If
there is no free charging pile, the EVs have to wait in a queue based
on the First Come First Serve (FCFS) order of reservations. During charging, the queue states of the EVCSs are monitored and updated to the CRP. We assume that parking without charging while there are waiting EVs can be avoided. On parking without charging, readers may refer to \cite{tucker_online_2020}.

Apart from the above protocol, an EV can simply depart and get charged at an EVCS without buying a CR or making a reservation in advance. However, such an EV is unable to learn about the charging price and waiting time for trip planning in advance. 

\section{Waiting Time Modeling}\label{sec_model}

In this section, we model the waiting time in EV charging and
theoretically analyze the average waiting time of the charging
system. In the CRM, waiting time estimation is needed in timeframe
selection and EVCS selection of the EVs (see Table~\ref{tab-ref}). The
two scenarios differ in the amount of time in advance of charging and
the consequent uncertainty of EV behaviors. 
When an EV departs and selects an EVCS, its arrival time at the EVCS
can be estimated with high accuracy by considering the real-time traffic conditions \cite{routing-1}. Moreover, the real-time EVCS queue states can be leveraged. In contrast, EVs may select a timeframe several hours in advance, and the real-time information is not available then. We start from the waiting time at an EVCS in the real-time scenario.

\subsection{Waiting time at an EVCS}\label{sec_ATE} 

\begin{algorithm}[!t]
  \caption{Available time of an EVCS}
  \label{alg_ATE}

  \begin{algorithmic}[1]
    \renewcommand{\algorithmicensure}{\textbf{Input:}} \ENSURE
    Parameters of the EVCS: number of charging piles $c$, charging rate $r$; monitored by the EVCS: already charged energy $e^{\rm{char}}_i$; from the CRs/reservations: the arrival time $t^{\rm{a}}_i$ and the energy to charge $e_i$

    \renewcommand{\algorithmicensure}{\textbf{Output:}} \ENSURE
        The time for an available charging pile $t^{\rm{avail}}$

    \renewcommand{\algorithmicensure}{\textbf{Data structure:}} \ENSURE
      $\{ t^{\rm{f}}_i|i = 1, ... , I\}$, a queue of the time for the EVs at the EVCS to finish charging.
    
    \renewcommand{\algorithmicensure}{\textbf{A. Construct the queue:}} 
    \ENSURE \
    \STATE queue $= null$, $i = 1$
    \FOR{charging EVs}
        \STATE $t^{\rm{f}}_i = t^{\rm{curr}} + (e_i - e^{\rm{char}}_i)/r$. \\
        $\backslash *$ $t^{\rm{curr}}$ is the current time. $*\backslash$
        
        \STATE Add $t^{\rm{f}}_i$ to queue, $i++$. 

        \STATE Sort queue in ascending order of $t^{\rm{f}}_i$.
    \ENDFOR
    \FOR{waiting EVs and EVs with reservations} 
        \STATE $t^{\rm{w}}_i = {\rm max} \{0, t^{\rm{f}}_{i-c} - t^{\rm{a}}_i\}$. \\
        $\backslash *$ The $i^{\rm th}$ EV has to wait until the $(i-c)^{\rm th}$ EV finishes charging. Perform in ascending order of $t^{\rm{a}}_i$, so that any $t^{\rm{f}}_i$ before is already in queue. $*\backslash$

        \STATE $t^{\rm{f}}_i = t^{\rm{a}}_i + t^{\rm{w}}_i + e_i/r$. \\
        $\backslash *$ Finishing charging after waiting. $*\backslash$
        
        \STATE Add $t^{\rm{f}}_i$ to queue, $i++$.
    \ENDFOR

    \renewcommand{\algorithmicensure}{\textbf{B. Calculate the Available Time}}\ENSURE \
    \STATE $I = |\{t^{\rm{f}}_i\}|$
    \IF{$I < c$}
        \STATE \textbf{return} $t^{\rm{avail}} = t^{\rm{curr}}$. \\
        $\backslash *$ There is currently at least one idle pile. $\backslash *$ 
    \ELSE
        \STATE \textbf{return} $t^{\rm{avail}} = t^{\rm{f}}_{I-c+1}$. \\
        $\backslash *$ An idle pile is available when the $(I-c+1)^{\rm th}$ EV finishes charging. $\backslash *$ 
    \ENDIF
  \end{algorithmic}
\end{algorithm}

Consider an arbitrary EVCS (therefore the index of EVCSs $k$ is omitted for simplicity) with $c$ charging piles, and each can operate at the rate of $r$. If all the charging piles have been occupied, EVs need to wait in a queue based on the FCFS order. Viewing the EVCS as a charging system, we provide the following definitions:

\begin{definition}
  The queue state of a charging system is defined as a queue of the times for the EVs to finish charging in the charging system.
\end{definition}

\begin{definition}
  The available time of a charging system is defined as the time for an idle charging pile in the charging system to be available for charging.
\end{definition}

Clearly, the above definitions also apply to charging systems consisting of more than one EVCS. 

The queue state of the EVCS can be denoted as ${\rm queue} = \{ t^{\rm{f}}_i|i = 1, ... , I\}$, where $t^{\rm{f}}_i$ and $I$ are the time for the $i^{\rm th}$ EV (including those with reservations) to finish charging and the length of the queue, respectively. With the already charged energy $e^{\rm{char}}_i$ monitored by the EVCS, the arrival time $t^{\rm{a}}_i$ from the reservations, and the energy $e_i$ from the CRs, the queue state can be constructed (Algorithm~\ref{alg_ATE}.A). Then, Algorithm~\ref{alg_ATE}.B is called to calculate the available time $t^{\rm{avail}}$ of the EVCS. Finally, the estimated waiting time at the EVCS with respect to the arrival time $t^{\rm{a}}$ of an EV is given by Equation (\ref{waiting_time}). 

\begin{remark}
  The arrival time $t^{\rm{a}}_i$ of the reservations is an estimated result, which introduces error into the waiting time estimation in Algorithm~\ref{alg_ATE} and Equation (\ref{waiting_time}). Since the calculation of $t^{\rm{avail}}$ is affected only if $I > 2c$, which rarely happens, the error of Equation (\ref{waiting_time}) mainly lies in the traveling time estimated by navigation software.
\end{remark}
 
\subsection{Average waiting time of the charging system}
Consider a charging system with $K$ EVCSs, where EVCS $k$ is equipped with $c_k$ charging piles operating at the rate of $r$, for $k \in \mathcal{K} = \{ 1,...,K \}$. The arrival time of an EV at the charging system is defined as the time when it arrives at any of the EVCSs (selected by the EV in some way), and the interarrival time is the time between two neighboring arrivals. We call the above charging system the original system (OS). 

\begin{definition}
  The imaginary system (IS) of a charging system is defined as a mapping of the OS, which
  consists of one EVCS equipped with all the charging piles in the OS
  (i.e., its charging piles $c = \underset{k \in \mathcal{K}}{\Sigma}
  c_k$); when an EV arrives at the OS, an EV with the same amount of
  energy to charge also arrives at the IS; and the EVs in the IS are charged according to the FCFS order.
\end{definition}

\begin{property}\label{property_lower_bound}
  An average waiting time lower bound of the OS is given by the average waiting time of the IS.
\end{property}

\begin{proof}
  As the order of EV charging is set as the FCFS order in both the OS
  and IS, the average waiting time is only determined by the EVCS
  selection of the EVs. The difference between the two systems is that
  when an EV arrives in the OS (at the EVCS selected in some way), it
  may have to wait, even if there are idle charging piles in other
  EVCSs in the charging system. When an EV arrives in the IS, however,
  the EV does not need to wait as long as there are available piles in
  the system. Therefore, the average waiting time in the OS increases
  more than that of the IS if there are waiting EVs while idle piles
  exist in the OS. Otherwise, the average waiting time increases the
  same for the two systems. Therefore, the average waiting time of the
  IS is a lower bound of that of the OS.
\end{proof}

\begin{theorem}\label{theorem_equal}
  If the EVs arrive at the EVCS with the earliest available time, the average waiting time of the OS is equal to the IS.
\end{theorem}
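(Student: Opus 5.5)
I will prove the statement by a coupling argument between the OS and the IS, driven by the same arrival sequence. The aim is to show by induction on the arrival index that every EV experiences exactly the same waiting time in both systems. Equality of the average waiting times then follows at once.

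First, fix notation. Index the EVs by arrival order $i=1,2,\dots$, with arrival times $t^{\rm a}_1\le t^{\rm a}_2\le\cdots$ and charging times $e_i/r$, identical in both systems by the definition of the IS. For each system and each $i$, consider the multiset of "pile release times" just before EV $i$ arrives. Each of the $c=\sum_k c_k$ piles contributes the time at which it becomes free; in the OS these are partitioned among the EVCSs.

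The induction hypothesis is that, just before EV $i$ arrives, the multiset of all $c$ pile release times in the OS (taken over all EVCSs together) coincides with the multiset of pile release times in the IS.

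For the inductive step, argue as follows.
\begin{itemize}
\item In the IS, under FCFS, EV $i$ takes the earliest-released pile. Its start time is $\max\{t^{\rm a}_i,\min(\text{release times})\}$, which matches $t^{\rm{avail}}$ from Algorithm~\ref{alg_ATE} applied to the whole system.
\item In the OS, the available time of EVCS $k$ is the minimum release time among its $c_k$ piles. The EV chooses the EVCS with the earliest available time, and the minimum over $k$ of these per-station minima equals the global minimum.
\item Therefore EV $i$ starts at the same time, waits the same $\max\{\text{global min}-t^{\rm a}_i,0\}$, and replaces the same release value by the same new finish time. The multisets stay equal.
\end{itemize}
The base case is trivial, since all piles are idle initially. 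Summing the equal waiting times gives equal totals and hence equal averages.

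The delicate step is reconciling FCFS with the earlier-arriving EVs' choices. In the OS, an EV arriving later must not be able to jump ahead at some pile. I expect this to be the main obstacle.

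I would handle it by noting two things. Assignment happens at arrival, in arrival order, and each pile is committed to exactly one EV. So the service order at every pile is consistent with arrival order, and FCFS in the IS yields the same pile-to-EV matching up to a relabeling of piles. Ties among piles with equal release times do not affect waiting times, so I would break them arbitrarily.

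I would also remark that "arriving at the EVCS with the earliest available time" is interpreted with $t^{\rm{avail}}$ computed from the true queue state. The result therefore relies on accurate reservation information, as discussed in the Remark after Algorithm~\ref{alg_ATE}. Finally, this realizes equality in the lower bound of Property~\ref{property_lower_bound}: it excludes exactly the situation, identified in that proof, in which an EV waits in the OS while an idle pile exists elsewhere.
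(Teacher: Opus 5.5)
Your proof is correct and is essentially the paper's argument: the paper also couples the OS and the IS on the same arrival sequence. It shows by induction over arrival and departure events that the system-wide queue states stay identical, because an EV going to the EVCS with the earliest available time starts charging at the system-wide available time $t^{\rm f}_{I-c+1}$, exactly as in the IS. The only difference is bookkeeping: you track the $c$ per-pile release times split across stations rather than the pooled sorted list of finish times, so the step ``earliest available time over the EVCSs equals the available time of the whole system'' becomes an immediate min-of-mins, whereas the paper simply asserts it for $t^{\rm f}_{I-c+1}$ and treats $I<c$ and departure events as separate cases, which your $\max\{\cdot,t^{\rm a}_i\}$ absorbs.
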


\begin{proof}
  See Appendix \ref{theorem_equal_app}.
\end{proof}

\begin{corollary}\label{corollary_equal}
  If the difference in traveling time with respect to the EVCSs is negligible compared to the difference in waiting time, and the EVs arrive at the EVCS with the minimum total time (\ref{select-cs}), the average waiting time of the OS is equal to the IS.
\end{corollary}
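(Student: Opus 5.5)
The plan is to reduce Corollary~\ref{corollary_equal} to Theorem~\ref{theorem_equal}. I will show that, under the stated hypothesis, the EVCS selection rule (\ref{select-cs}) coincides with the rule ``arrive at the EVCS with the earliest available time.'' Theorem~\ref{theorem_equal} then yields equality of the average waiting times of the OS and the IS.

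First I rewrite the objective in (\ref{select-cs}) using (\ref{waiting_time}) and $t^{\rm{a}}(k)=t^{\rm{d}}+t^{\rm{oc}}(k)$. For each $k$ this gives
\begin{equation*}
t^{\rm{w}}(k)+t^{\rm{oc}}(k)+t^{\rm{cd}}(k)=\max\{t^{\rm{avail}}(k)-t^{\rm{d}},\,t^{\rm{oc}}(k)\}+t^{\rm{cd}}(k).
\end{equation*}
Under the hypothesis, the traveling-time terms $t^{\rm{oc}}(k)$ and $t^{\rm{cd}}(k)$ differ across $k$ only negligibly compared with differences in waiting time. I formalize this by treating $t^{\rm{oc}}(k)\equiv t^{\rm{oc}}$ and $t^{\rm{cd}}(k)\equiv t^{\rm{cd}}$ as constants, which is the limiting case the statement intends. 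The objective then becomes $\max\{t^{\rm{avail}}(k),\,t^{\rm{d}}+t^{\rm{oc}}\}$ plus a constant, which is a nondecreasing function of $t^{\rm{avail}}(k)$.

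It follows that any EVCS minimizing $t^{\rm{avail}}(k)$ also minimizes (\ref{select-cs}). There is one caveat. When several EVCSs satisfy $t^{\rm{avail}}(k)\le t^{\rm{a}}$, they are all tied in (\ref{select-cs}), but they are all equally good for the argument. At any of them the EV waits zero time, just as it would in the IS, because an idle pile is available upon arrival. The waiting time the EV incurs in the OS is therefore $\max\{\min_k t^{\rm{avail}}(k)-t^{\rm{a}},0\}$, exactly what it would incur at the earliest-available EVCS.

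The main obstacle is to check that this tie case does not break the hypothesis of Theorem~\ref{theorem_equal}. If that theorem's proof (Appendix~\ref{theorem_equal_app}) relies only on each EV starting to charge at the earliest time a pile in the whole system frees up, this follows as argued above. If it relies on choosing the literal argmin of $t^{\rm{avail}}$, I will restate the condition it actually uses: no EV waits while a pile anywhere is idle. I will then verify that this condition holds under (\ref{select-cs}). With this, Property~\ref{property_lower_bound} holds with equality, and the corollary follows.
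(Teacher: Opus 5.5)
Your proposal is correct and follows the paper's route: with negligible travel-time differences, the minimum-total-time choice (\ref{select-cs}) reduces to choosing $\arg\min_k t^{\rm{avail}}(k)$, so the proof of Theorem~\ref{theorem_equal} carries over unchanged. Your explicit treatment of the idle-pile tie case, via the objective $\max\{t^{\rm{avail}}(k),\,t^{\rm{d}}+t^{\rm{oc}}\}$, is slightly more careful than the paper's proof, which handles only the case $I \ge c$ (where $t^{\rm{w}}(k) = t^{\rm{avail}}(k) - t^{\rm{a}}_{I+1}$) and leaves the zero-wait case implicit.
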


\begin{proof}
  See Appendix \ref{corollary_equal_app}.
\end{proof}

Considering the much higher density of EVCSs in future PFC
systems, we assume the premise in Corollary \ref{corollary_equal} to be
satisfied. Furthermore, with real-time queue states available, EVs can select, reserve, and arrive at the EVCS with the minimum total time (\ref{select-cs}), as in \cite{MWT,trip-duration-2018,subscribe-2017,reservation-2020}. Therefore, the average waiting
time of the OS can be estimated by that of the IS. Nevertheless, the EVCS with the earliest available time might not be the one with the minimum total time given by (\ref{select-cs}) (e.g., an EVCS with an idle charging pile in the far distance), which could lead to underestimating the average waiting time of the OS, and we will investigate whether the error is tolerable in the case study.

Without loss of
generality, we consider the distributions of the interarrival time and
charging time of EVs as general (G). Since EVs select EVCSs on their own, we further assume the distribution of the interarrival time to be independent (GI). Thus, the IS is formulated as the
GI/G/c model. The exact analysis of the queueing model can be
difficult, but simple approximations have been developed and
evaluated. We approximate the average waiting time in the GI/G/c model
with the method concerning heavy traffic conditions developed in \cite{kingman1965heavy}. Let $w$ be the average waiting time in the GI/G/c model and $w(\rho, \tau, \sigma^2_{\rm{a}}, \sigma^2_{\rm{c}}, c)$ represent $w$ as a function of the five parameters, where $\rho = \lambda \tau/ c, \lambda$, and $\tau$ denote the service intensity, arrival rate, and mean charging time, $\sigma^2_{\rm{a}}$, and $\sigma^2_{\rm{c}}$ denote the relative squared coefficient of variation (divided by the mean) of interarrival time and charging time, respectively. In steady stable state (i.e., $\rho < 1$), the average waiting time of a GI/G/c system is approximated by
\begin{equation}\label{mean_waiting_time}
  w(\rho, \tau, \sigma^2_{\rm{a}}, \sigma^2_{\rm{c}}, c) = \tau\frac{(\sigma^2_{\rm{a}} + \sigma^2_{\rm{c}})}{2 c} \frac{\rho}{(1 - \rho)}.
\end{equation}

We conclude that given the service intensity $\rho$, the average waiting time is inversely proportional to the number of charging piles $c$ in the whole charging system. Following convention (instead of using the notaions in other parts of this paper), let $J = (\sigma^2_{\rm{a}} + \sigma^2_{\rm{c}})/2c$, and Equation (\ref{mean_waiting_time}) is in the same form as the Davidson function: $w(x) = \tau Jx/(c - x)$, where the EV flow $x = c \rho$ \cite{davidson1966flow}. However, the waiting time estimation by the Davidson function in \cite{Wei-2018, MS-2021, Cui-2021,enhance-2020} is based on the capacity $c$ and service intensity $\rho$ of a single EVCS, in which the effect of the utilization of EVCS real-time queue states on waiting time reduction is not considered. 

Property 1 and Corollary 1 together show that given the service intensity, the EVCS selections described by (\ref{select-cs}) (i.e., EVs select the EVCS with the minimum total time cost) can lead to the minimum average waiting time, which decreases with the total number of charging piles in the charging system. This inspires us to simply make the queue states available and hand over the selection of EVCSs to EVs instead of pricing differently across the EVCSs because extra pricing does not make the users' EVCS selection optimal, if not worse, in terms of average waiting time. Therefore, the charging pricing in the CRM is timeframe-oriented, and we will discuss it in the next section.

\section{Methodology for Waiting Time Estimation\\ and Charging Right Pricing}\label{sec_methodology}

\subsection{Estimating waiting time in advance}\label{WTE}
Although the arrival time, charging time, and EVCS selections of the
EVs become stochastic in advance, the statistics of CRs can be
leveraged to estimate the waiting time of the charging
system by Equation (\ref{mean_waiting_time}). Nevertheless,
Equation (\ref{mean_waiting_time}) is derived from the steady-state
queueing model, while the charging demand  usually varies over the
day. Therefore, divide the day into timeframes meeting the  following
requirements. (a) In one particular timeframe, the parameters $(\rho,
\tau, \sigma^2_{\rm{a}}, \sigma^2_{\rm{c}})$ are steady. (b) The durations of the
timeframes are long enough that the system is mostly in steady state in each timeframe. For the sake of discussion, we consider 12 timeframes with a duration of $\Delta t = 120$ min (e.g., 17:00$-$19:00) and assume the above requirements to be met.

Let ${\rm CR}_i = (crid_i, usrid_i, j_i, e_i, info_i), i \in
\mathcal{I}$ represent an arbitrary CR. In an arbitrary
timeframe $j$, of the five parameters in Equation
(\ref{mean_waiting_time}), $c$ is the total number of charging piles
and known, $\rho$ is given by $\rho = \lambda \tau/ c$, while $
\lambda, \sigma^2_{\rm{a}}, \tau, \sigma^2_{\rm{c}}$ are the statistics of the interarrival times and charging times of EVs, estimated by
\begin{equation}\label{derive_lambda}
  \lambda(j) = \frac{|\{CR_i | j_i = j;i \in \mathcal{I} \}|}{\Delta t},
\end{equation}
\begin{equation}
  \sigma^2_{\rm{a}}(j) = \frac{1}{\lambda(j)} \underset{j_i = j; i \in \mathcal{I}}{\Sigma} \frac{(t^{\rm{a}}_i - t^{\rm{a}}_{i-1} - \frac{1}{\lambda(j)})^2}{(1/\lambda(j))^2},
\end{equation}
\begin{equation}
  \tau(j) = \frac{1}{\lambda(j)} \underset{j_i = j; i \in \mathcal{I}}{\Sigma} \frac{e_i}{r},
\end{equation}
\begin{equation}\label{derive_sigma}
  \sigma^2_{\rm{c}}(j) = \frac{1}{\lambda(j)} \underset{j_i = j; i \in \mathcal{I}}{\Sigma} \frac{( e_i / r - \tau(j))^2}{\tau(j)^2},
\end{equation}
respectively, where $t^{\rm{a}}_{i-1}$ represents the arrival time of the EV
before EV $i$. However, the arrival time of the EVs is not known in
advance. Therefore, $\sigma^2_{\rm{a}}$ can be estimated with the historical
data recorded in the $info$ of the CRs. 

\subsection{Charging right pricing}\label{pricing}
The price of ${\rm CR} = (crid, usrid, j, e, info)$ is designed to
reflect the external time cost induced by charging an EV with energy
$e$ in timeframe $j$. For the time cost, we focus on the loss of EV user utility due to the traveling time on the road, the waiting time at the EVCS, and the SD of the trip plan. Nevertheless, the money paid for CRs should not be distributed to users to cover their time costs, which may encourage charging during peak hours. Instead, we suggest maintaining the CRP or encouraging participation in the CRM with money.

In the CRM, the net price of charging consists of the CR price,
electricity price, and other components. The pricing of electricity is
beyond the scope of this paper. Although it is included in the CRM, we
focus more on the pricing of CRs and use the same notation for CR
price and the net price of charging. A static electricity price is used in the case study. 

Consider an arbitrary timeframe $j$ during which $I = l \lambda(j)$
CRs have been bought. To simplify notation, $j$ is omitted in the representation of parameters in the following paragraphs (e.g., $\lambda$ for $\lambda(j)$). We assume that no CR is abandoned, and its owner will get charged in the stipulated timeframe. The users follow the CRM protocol and select the optimal EVCSs (\ref{select-cs}) $\{k^*_i | i \in \mathcal{I} = \{0,...,I\} \}$. The time cost $W$ of the users in the timeframe is
\begin{equation}
  W =  \beta^{\rm{TT}} \underset{i \in \mathcal{I}}{\Sigma} (t^{\rm{w}}_i(k^*_i) + t_i^{oc}(k^*_i) + t_i^{cd}(k^*_i)),
\end{equation}
where $\beta ^ {TT}$ is the average user coefficient of total time and
needs to be properly set. Now, a new EV $i^\prime$ buys a CR with
energy $\delta e$ and gets charged within the timeframe. When the
charging is performed, the EVCS selections will change to
$\{k^{\prime*}_i| i \in \mathcal{I} \cup \{i^\prime\}\}$ due to the
change in queue states delivered by $i^\prime$. As we have assumed the
difference in traveling time with respect to the EVCSs to be
negligible, the change in traveling time $(t_i^{oc}(k^{\prime*}_i) +
t_i^{cd}(k^{\prime*}_i)) - (t_i^{oc}(k^*_i) + t_i^{cd}(k^*_i))$ is
negligible. Furthermore, collecting the detailed traveling time and
the change is hardly practical due to computational complexity and
privacy issues. Therefore, we only consider the change in waiting
time. Note that the change in the parameters $\tau, \sigma^2_{\rm{a}}$ and
$\sigma^2_{\rm{c}}$ due to a new EV is negligible. Let $w(\lambda)$ represent
the estimated waiting time as a function of $\lambda$, let $\delta
\lambda = \delta e /r \tau l$ represent the normalized change of
$\lambda$ due to EV $i^\prime$, and let the time cost of the users
changes be calculated by
\begin{equation}
  \delta W = \underset{i \in \mathcal{I} \cup \{i^\prime\}}{\Sigma} \beta^{\rm{TT}} w(\lambda + \delta \lambda) - 
  \underset{i \in \mathcal{I} }{\Sigma} \beta^{\rm{TT}} w(\lambda),
\end{equation}
where $\beta^{\rm{TT}} w(\lambda + \delta \lambda)$ is the estimated waiting time for EV $i^\prime$, and $\underset{i \in \mathcal{I} }{\Sigma} \beta^{\rm{TT}}_i (w(\lambda + \delta \lambda) - w(\lambda))$ is the induced external time cost. As $\delta \lambda \ll \lambda$, $w(\lambda + \delta \lambda)$ is approximated by its first-order Taylor expansion, i.e., $w(\lambda + \delta \lambda) \approx w(\lambda) + (\partial w/\partial \lambda) \delta \lambda$. Thus, the induced external time cost (still represented by $\delta W$) is
\begin{equation}\label{unit_cost}
  \delta W = \underset{i \in \mathcal{I}}{\Sigma} \beta^{\rm{TT}} (  \frac{\partial w}{\partial \lambda})\delta \lambda,
\end{equation}
where $\underset{i \in \mathcal{I}}{\Sigma} \beta^{\rm{TT}} = I \beta^{\rm{TT}}$
is given by $l \lambda \beta^{\rm{TT}}$. Let $P(\delta e, \rho) = \delta W$
represent the price of a CR as a function of $\delta e$ and $\rho$, and substitute Equation (\ref{mean_waiting_time}) into Equation (\ref{unit_cost}) ($\lambda = \rho c/\tau$):
\begin{equation}\label{sale_price}
  P(\delta e, \rho) = \beta^{\rm{TT}} (\frac{\rho}{(1 - \rho)^2} \frac{(\sigma^2_{\rm{a}} + \sigma^2_{\rm{c}})}{2}) \frac{\delta e}{c r}.
\end{equation}
Therefore, the price of a CR is the energy to charge $\delta e$ multiplied by the unit price $p(\rho)$, which is
\begin{equation}\label{unit_price}
  p(\rho) = \beta^{\rm{TT}} (\frac{\rho}{(1 - \rho)^2} \frac{(\sigma^2_{\rm{a}} + \sigma^2_{\rm{c}})}{2}) \frac{1}{c r}.
\end{equation}

Note that the unit price of a CR $p(\rho)$ is determined by the statistics of the CRs, and is dynamically and automatically updated as CRs are bought, without iterative processes as in \cite{Cui-2021, Wei-2017}. Moreover, the order of buying CRs can lead to different prices in the same timeframe, and the earlier the CR is bought, the lower the price is. Thus, EVs are encouraged to schedule their trip plans in advance and participate in the CRM. Nevertheless, this design makes deliberate arbitrage possible: since the price of CR will only rise but not fall, it is always profitable to buy a CR when the price is low and sell it later. Thus, users with no charging demand can also purchase CRs in the expected peak hours, which is not wanted. Although the detailed analysis of CR arbitrage is beyond the scope of this paper, we suppose that simple restrictions could be effective, such as limiting the times that a user can sell CRs within a month. In addition, since CRs can only be sold to other users and cannot be sold back to the CRP, arbitragers risk failing to sell their CRs.

The overall framework of the CRM is shown in Fig.~\ref{flowchart}. Note that the statistics of CRs may remain almost unchanged most of the time. Therefore, when updating the price and waiting time, $
\tau, \sigma^2_{\rm{a}}$, and $\sigma^2_{\rm{c}}$ can be updated only at regular intervals (e.g. every month) with historical data to reduce computational complexity.

\begin{figure}[!t]
  \centering
    \includegraphics[width=3.0in]{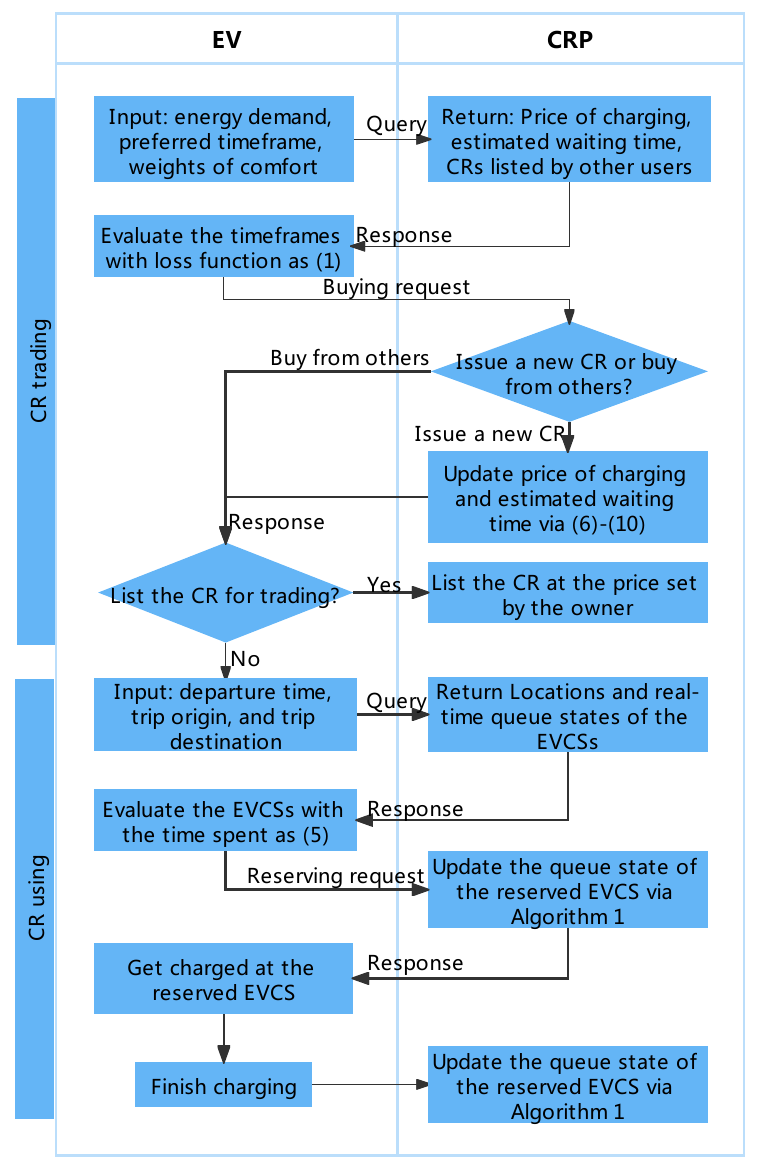}
  \caption{Flow Diagram of the Proposed CRM.}
  \label{flowchart}
\end{figure}

\section{Case Study}\label{sec_case}
In this section, the accuracy of the proposed methodology for waiting time estimation is verified, and the effect of the CRM on time cost reduction is investigated through simulations. The simulations have been carried out in the Opportunistic Network Environment \cite{keranen2008opportunistic}, and the basic scenario is built in the traffic network with an area of $4500 \times 3400 \ {\rm m^2}$ abstracted from the map of Helsinki city, where EVCSs are randomly deployed (Fig.~\ref{chap4-f0} and \ref{chap4-f01}), each provided with $6$ charging piles operating at $60$ kW.

EVs are randomly generated according to an exponential distribution
with mean $E_j$ in timeframe $j$, for $j = 1,...,12$. The durations of
timeframes are set at $\Delta t = 120$ min, and each simulation represents a
24-hour duration with 1 s resolution. The origin and destination $l^{\rm{o}},
l^{\rm{d}}$ of each EV are randomly selected on the map, while the energy to
charge $e$, moving speed on each street, and weights of comfort
$(\beta^{\rm{TT}}, \beta^{\rm{SDE}}, \beta^{\rm{SDL}})$ are independently generated
according to the distributions shown in Table~\ref{tab-EV-set} (${\rm
  U}$ for the uniform distribution). Simulations corresponding to each
parameter setting are repeated 1000 times.

\begin{figure}[!t]
  \centering
    \includegraphics[width=3.0in]{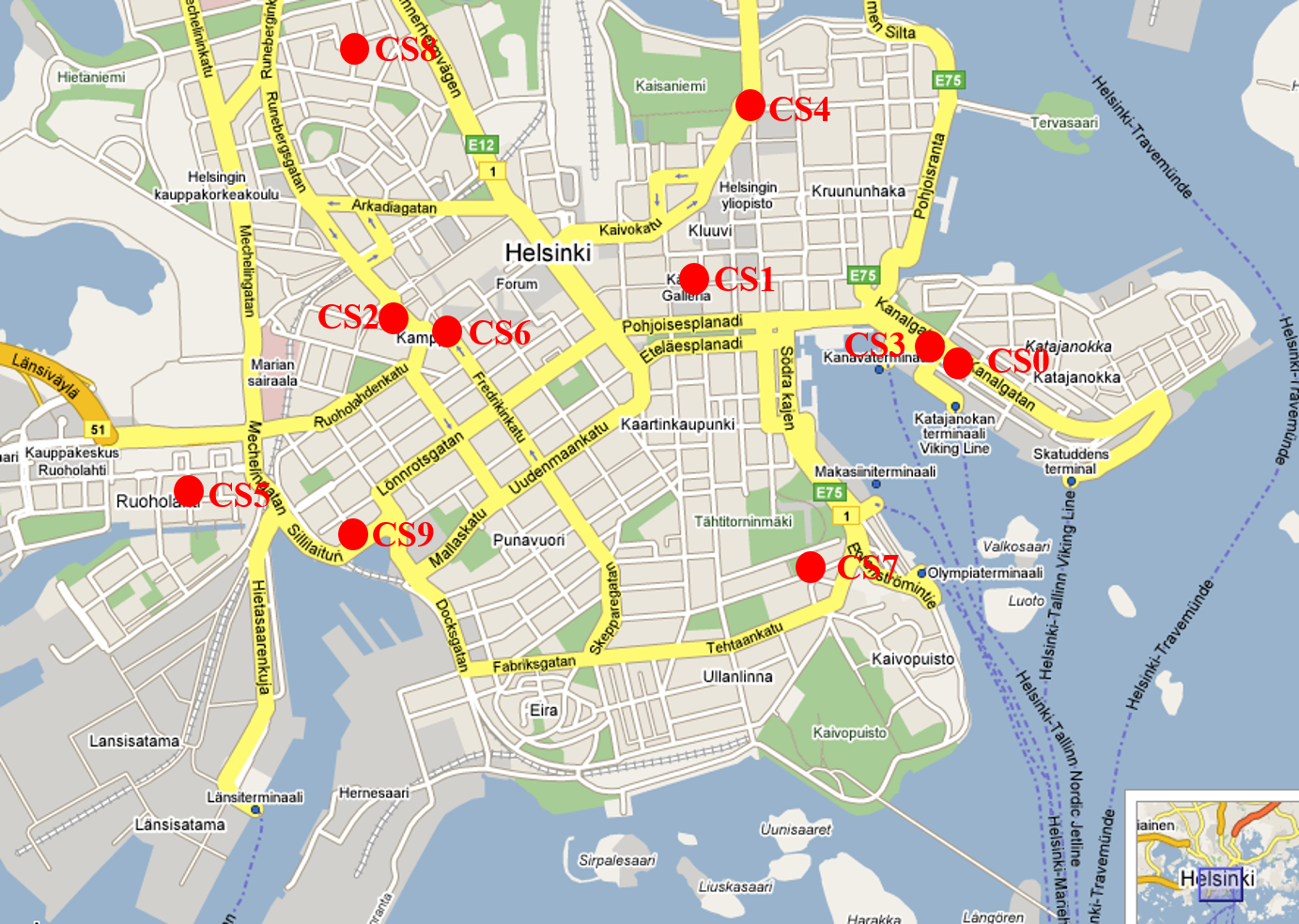}
  \caption{Map of Helsinki city and distribution of EVCSs.}
  \label{chap4-f0}
\end{figure}

\begin{figure}[!t]
  \centering
    \includegraphics[width=3.0in]{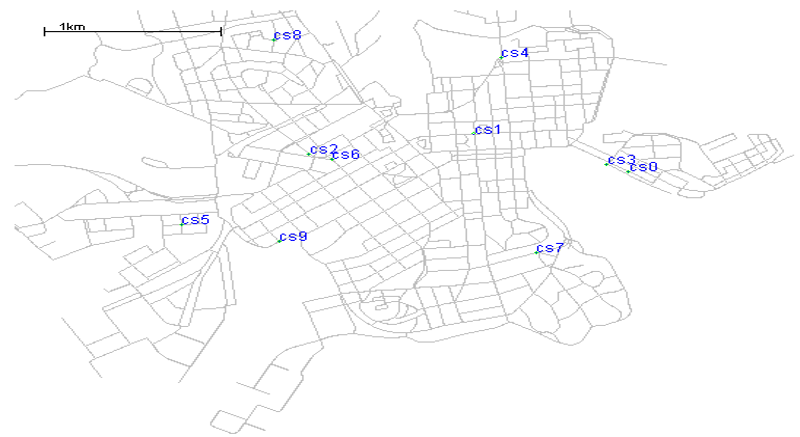}
  \caption{Simulation scenario of the urban PFC system.}
  \label{chap4-f01}
\end{figure}

\begin{table}[!t]
  \renewcommand{\arraystretch}{1.3}
  \caption{Distributions of EV Parameters}
  \label{tab-EV-set}
  \centering
  \begin{tabular}{c  c c}
    \toprule
    Item      & Distribution & Unit   \\
    \midrule
    Moving speed on each street & ${\rm U}(30, 50)$ & km/h \\
    Energy demand $e$   & ${\rm U}(10, 50)$ & kWh \\
    Coefficient for time $\beta^{\rm{TT}}$   & ${\rm U}(5, 15)$ & \$/h \\
    Coefficient for SDE $\beta^{\rm{SDE}}$   & ${\rm U}(0, 5)$ & \$/h \\
    Coefficient for SDL $\beta^{\rm{SDL}}$   & ${\rm U}(0, 5)$ & \$/h \\
    \bottomrule
  \end{tabular}
\end{table}

\subsection{Steady state}

\begin{figure}[!t]
  \centering
    \includegraphics[width=3.0in]{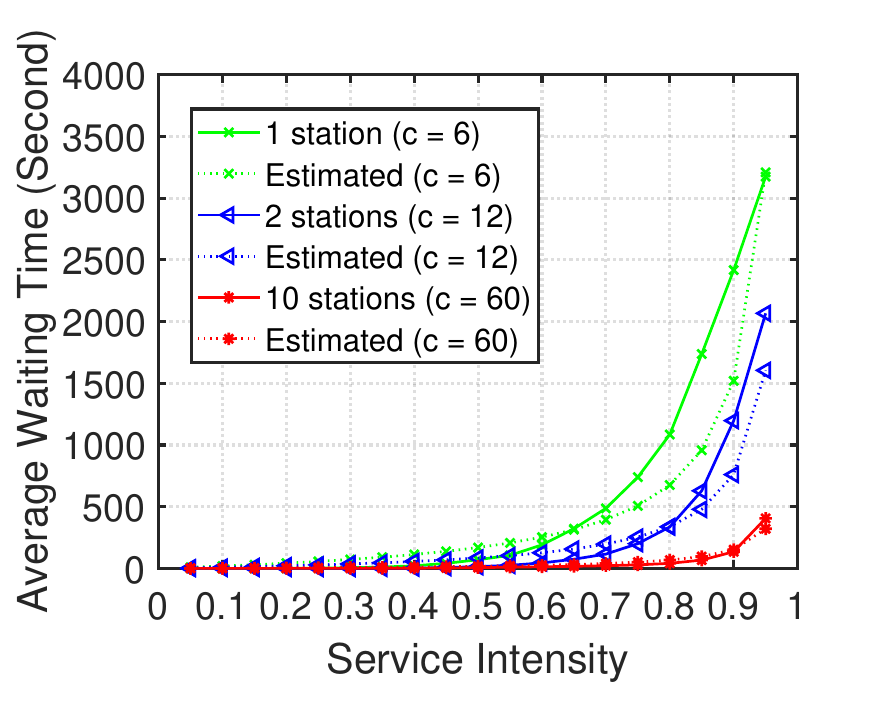}
  \caption{Average waiting time with respect to service intensity.}
  \label{chap4-f1}
\end{figure}

\begin{figure}[!t]
  \centering
    \includegraphics[width=3.0in]{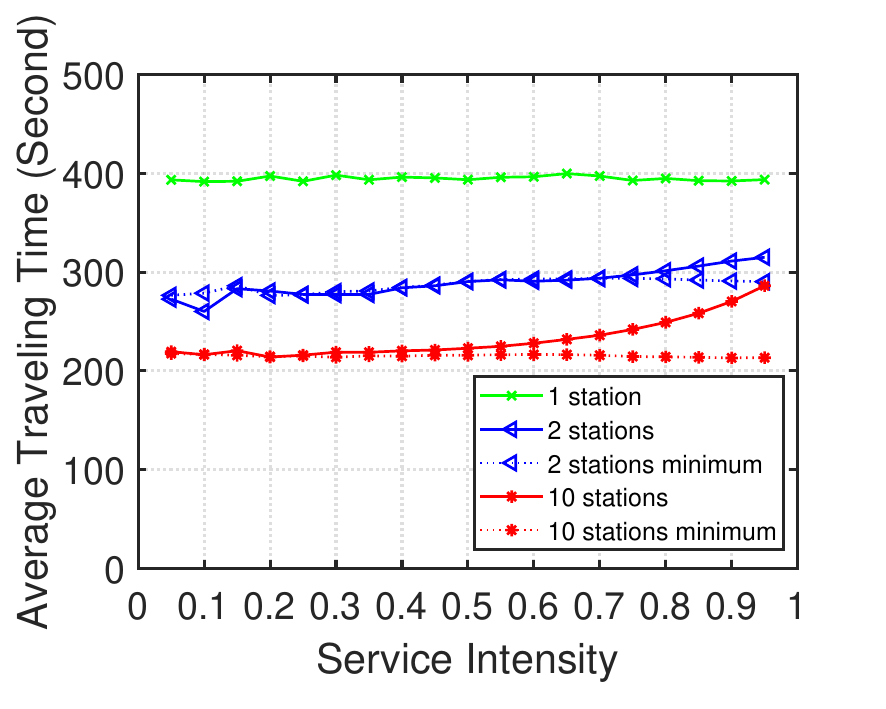}
    \caption{Average traveling rime with respect to service intensity.}
    \label{chap4-f2}
\end{figure}

First, in each simulation, set $E_j$ identically as $E$, for $j =
1,...,12$, to evaluate the charging system in steady state. Once an EV
is generated, it selects the EVCS with the minimum time cost (\ref{select-cs}), makes a reservation, and departs. Simulations are performed with 1, 2, and 10 stations randomly deployed, with a total number of charging piles $c$ of 6, 12, and 60, respectively. For comparison, service intensity $\rho = E\tau/cl$ is used to represent the setting of the simulations. Fig.~\ref{chap4-f1} shows the average waiting time in the charging system at different service intensities $\rho$, which increase drastically when $\rho$ approaches 1 (full capacity) in the three scenarios. Moreover, the average waiting time is approximately inversely proportional to the total number of charging piles $c$ in the charging system. For example, when $\rho = 0.95$, compared to the 1-station scenario (3174.3s), the average waiting time in the 2-station scenario (2066.3s) and the 10-station scenario (405.3s) is greatly reduced. In Fig.~\ref{chap4-f1}, the estimated waiting time (the dotted lines with the same color as the simulation results) is given by Equation (\ref{mean_waiting_time}), where $\tau, \sigma^2_{\rm{a}}$, and  $\sigma^2_{\rm{c}}$ are 30.01 min, 0.9773, and 0.1485, respectively, obtained from the simulation results through Equation (\ref{derive_lambda})-(\ref{derive_sigma}). Compared to the 10-station scenario, the estimation in the 2-station scenario is worse, as the difference in traveling time with respect to the EVCSs is larger. In Table~\ref{estimation-comparison}, the estimated waiting time by the Davidson function ($J = (\sigma^2_{\rm{a}} + \sigma^2_{\rm{c}})/2n = 0.0938$), and the M/M/n queueing model ($n = 6$) are compared. Both the Davidson function and the M/M/n queueing model only apply in the 1-station scenario, in which Equation (\ref{mean_waiting_time}) gives the same results as the Davidson function, as mentioned above. In the 10-station scenario, only our method still applies. Nevertheless, it tends to overestimate (underestimate) the waiting time at low (high) service intensity.

Fig.~\ref{chap4-f2} shows the average traveling time of the EVs, where
the dotted lines represent the corresponding scenarios in which EVs
select the EVCSs with the minimum traveling time for
comparison. Obviously, the results are the same in the 1-station
scenario, but in both the 2-station and 10-station scenarios, the
average traveling time is longer at higher service intensity, because
it is then more likely for the EVs to detour for less waiting
time. However, the increase in average traveling time is much less
than the increase in average waiting time, which is consistent with
the assumptions in Section~\ref{WTE}. For example, in the 10-station
scenario, as the service intensity rises from 0.9 to 0.95, an increase
of 270.2 s is observed in the average waiting time, while the average
traveling time only increases by 15.9 s.

\begin{table}[!t]
  \renewcommand{\arraystretch}{0.9}
  \caption{Comparison of Estimated Waiting Time (s)}
  \label{estimation-comparison}
  \centering
  \begin{tabular}{c|c|c|c|c}
  \toprule
  Service intensity             & 0.80      & 0.85     & 0.90      & 0.95     \\ \midrule
  Simulation results (1-station)            & 1084.4 & 1736.8 & 2416.9 & 3174.3 \\ \midrule
  Davidson function (1-station)& 675.5   & 956.9  & 1519.8 & 3208.5 \\ \midrule
  M/M/n model (1-station)     & 776.7 & 1248.1 & 2220.4 & 5193.5 \\ \midrule
  Simulation results (10-station)      & 41.1 & 67.1 & 135.1 & 405.3 \\ \midrule
  Equation (\ref{mean_waiting_time})(10-station)           & \textbf{67.5 }   & \textbf{95.7}  & \textbf{152.0} & \textbf{320.9} \\ \bottomrule
  \end{tabular}
\end{table}

\subsection{Typical day}
\begin{table}[!t]
  \renewcommand{\arraystretch}{0.9}
  \caption{Typical Day Settings of the Timeframes}
  \label{tab-time-frame-set}
  \centering
  \begin{tabular}{c  c c}
    \toprule
    Timeframe      & Mean Number of EVs & Electricity Price (\$/kWh)   \\
    \midrule
    1:00$-$3:00   & 24 & 0.24 \\
    3:00$-$5:00   & 24 & 0.24 \\
    5:00$-$7:00   & 120 & 0.24 \\
    7:00$-$9:00   & 192 & 0.24 \\
    9:00$-$11:00   & 264 & 0.30 \\
    11:00$-$13:00   & 192 & 0.36 \\
    13:00$-$15:00   & 192 & 0.36 \\
    15:00$-$17:00   & 192 & 0.30 \\
    17:00$-$19:00   & 264 & 0.36 \\
    19:00$-$21:00   & 192 & 0.36 \\
    21:00$-$23:00   & 120 & 0.30 \\
    23:00$-$1:00   & 24 & 0.24 \\
    \bottomrule
  \end{tabular}
\end{table}

\begin{figure*}[!t]
  \centering
  \subfloat[]{
     \includegraphics[width=0.24\linewidth]{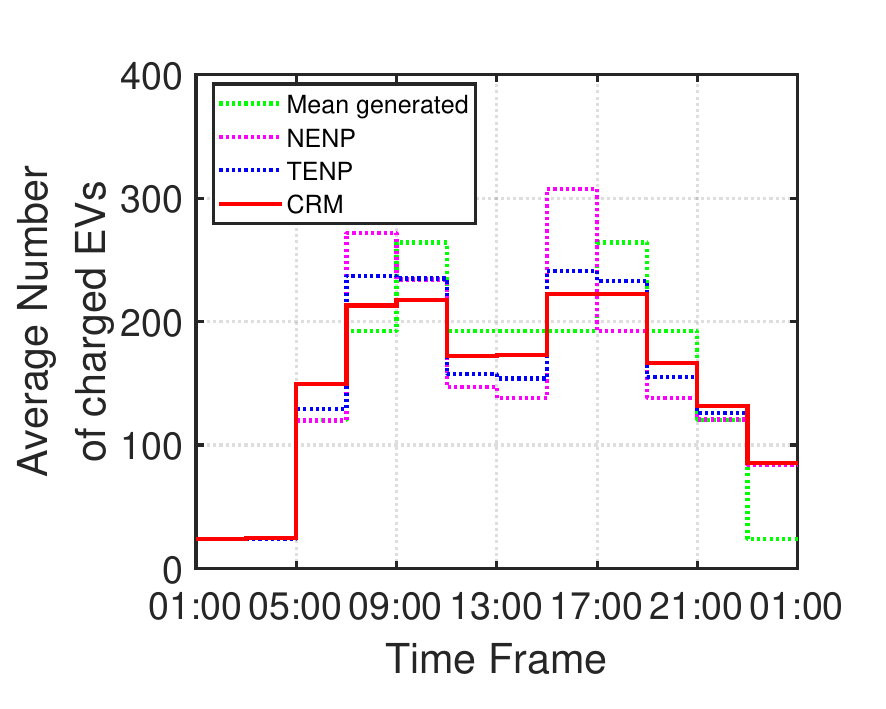}}
  \hfill
  \subfloat[]{
      \includegraphics[width=0.24\linewidth]{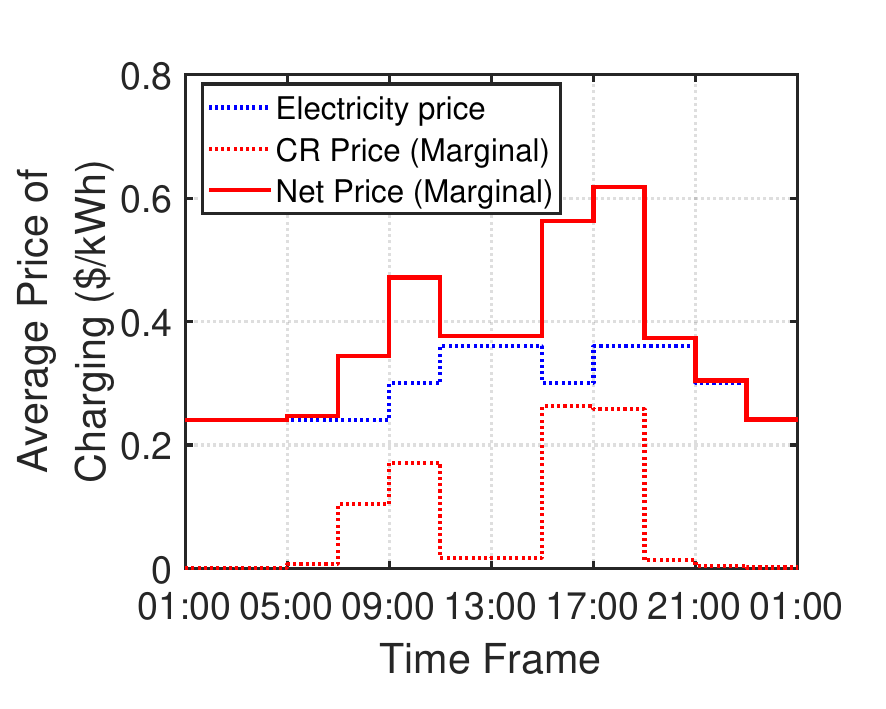}}
  \subfloat[]{
      \includegraphics[width=0.24\linewidth]{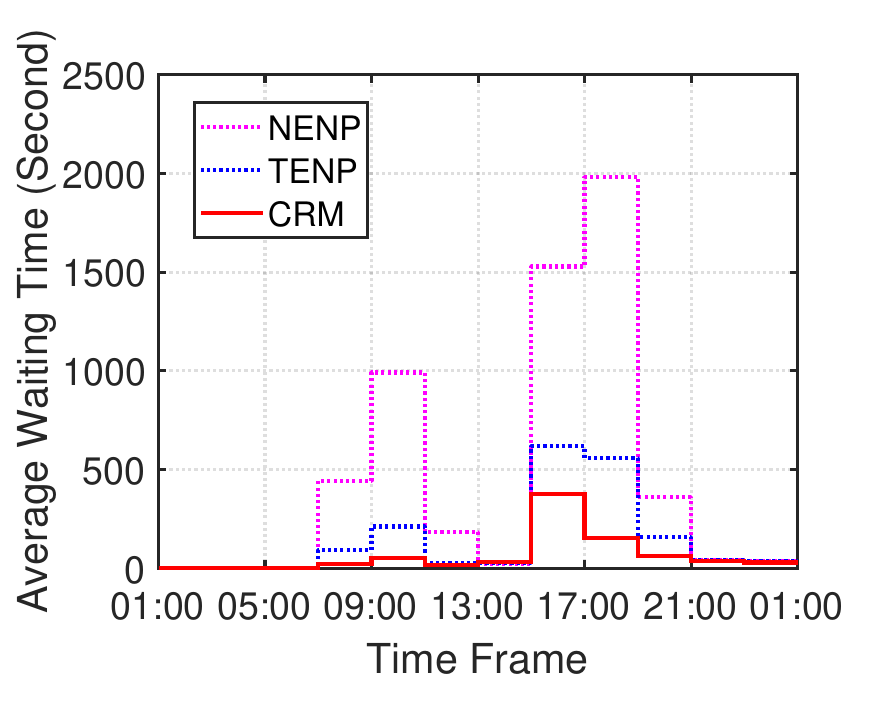}}
  \hfill
  \subfloat[]{
      \includegraphics[width=0.24\linewidth]{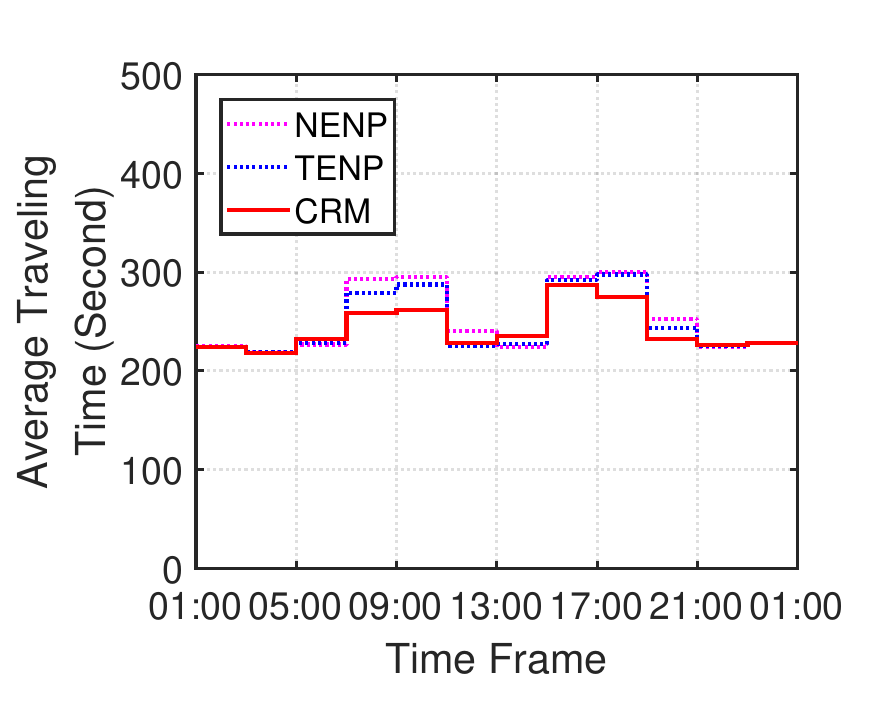}}
  \caption{Typical day: (a) Average number of charged EVs in the timeframes. (b) Average marginal price of charging in the timeframes. (c) Average waiting time of the timeframes. (d) Average traveling time of the timeframes.}
  \label{chap4-f5} 
\end{figure*}

\begin{table*}[!t]
  \renewcommand{\arraystretch}{1.0}
  \caption{Total Time Spent and Cost Induced on Typical Day \\ (Mean Number of EVs: 1800)}
  \label{tab-total-cost}
  \centering
  \resizebox{\textwidth}{11mm}{
    \begin{tabular}{c|ccc|ccccc|ccc}
      \hline
     & \multicolumn{3}{c|}{Time spent (h)} & \multicolumn{5}{c|}{Costs induced (\$)} & \multicolumn{3}{c}{Overall (\$)} \\ \hline
    Scheme & \multicolumn{1}{c|}{\begin{tabular}[c]{@{}c@{}}Charging \\ time\end{tabular}} & \multicolumn{1}{c|}{\begin{tabular}[c]{@{}c@{}}Waiting\\  time\end{tabular}} & \begin{tabular}[c]{@{}c@{}}Traveling\\ time\end{tabular} & \multicolumn{1}{c|}{\begin{tabular}[c]{@{}c@{}}Charging\\ time cost\end{tabular}} & \multicolumn{1}{c|}{\begin{tabular}[c]{@{}c@{}}Waiting\\ time cost\end{tabular}} & \multicolumn{1}{c|}{\begin{tabular}[c]{@{}c@{}}Traveling\\ time cost\end{tabular}} & \multicolumn{1}{c|}{\begin{tabular}[c]{@{}c@{}}Electricity\\ cost\end{tabular}} & \begin{tabular}[c]{@{}c@{}}Scheduled\\ delay cost\end{tabular} & \multicolumn{1}{c|}{\begin{tabular}[c]{@{}c@{}}Total\\ cost\end{tabular}} & \multicolumn{1}{c|}{\begin{tabular}[c]{@{}c@{}}Charging\\ right payment\end{tabular}} & \begin{tabular}[c]{@{}c@{}}Cost\\ reduction\end{tabular} \\ \hline
    NENP & \multicolumn{1}{c|}{899.6} & \multicolumn{1}{c|}{163.9} & 132.2 & \multicolumn{1}{c|}{9012.1} & \multicolumn{1}{c|}{1643.5} & \multicolumn{1}{c|}{1323.7} & \multicolumn{1}{c|}{16595.5} & 193.0 & \multicolumn{1}{c|}{28767.8} & \multicolumn{1}{c|}{0.0} & 0.0 \\ \hline
    TENP & \multicolumn{1}{c|}{900.5} & \multicolumn{1}{c|}{95.0} & 130.6 & \multicolumn{1}{c|}{9020.7} & \multicolumn{1}{c|}{954.1} & \multicolumn{1}{c|}{1308.3} & \multicolumn{1}{c|}{16680.1} & 172.5 & \multicolumn{1}{c|}{28135.7} & \multicolumn{1}{c|}{0.0} & 632.1 \\ \hline
    CRM & \multicolumn{1}{c|}{899.8} & \multicolumn{1}{c|}{49.7} & 126.6 & \multicolumn{1}{c|}{9012.6} & \multicolumn{1}{c|}{501.7} & \multicolumn{1}{c|}{1268.8} & \multicolumn{1}{c|}{16732.0} & 398.3 & \multicolumn{1}{c|}{27913.3} & \multicolumn{1}{c|}{521.7} & 854.5 \\ \hline 
    \end{tabular}
    }
\end{table*}

Second, we inspect the effect of the CRM on time cost reduction on a typical day, where the mean generated number of EVs and the static electricity price for charging in the timeframes are set to be different, as shown in Table~\ref{tab-time-frame-set}. Once an EV is generated, it first selects the timeframe to get charged (\ref{select-time-frame}), where the preferred timeframe is set to that of its generation), then departs randomly in the selected timeframe and selects an EVCS to get charged (\ref{select-cs}). For comparison, the following schemes are evaluated:

\paragraph{NENP} No waiting time estimation is available in advance, and the price of charging is merely set at the electricity price. 

\paragraph{TENP} Using the information from the CRs, waiting time
estimation is available in advance, but the charging price is still set at the electricity price.

\paragraph{CRM} (The proposed mechanism) Waiting time estimation is
available in advance, and the price of a CR (\ref{sale_price}) is integrated into the price of charging. For the sake of discussion, CRs are used by owners instead of being sold to others.

Figs.\ref{chap4-f5}(a) and \ref{chap4-f5}(b) present the average number of
charged EVs and the price of charging with respect to the
timeframes. Figs.\ref{chap4-f5}(c) and \ref{chap4-f5}(d) present the average
waiting time and average traveling time of the timeframes. In the NENP
scheme (pink dotted lines), the number of charged EVs deviates from
the original generation (green line in Fig.~\ref{chap4-f5}(a)) and has
reached 272 and 307 in 7:00$-$9:00 and 15:00$-$17:00, respectively,
due to the relatively low electricity price, which results in an
average waiting time of 992.1 s and 1528.2 s, respectively. With the
help of waiting time estimation in the TENP scheme (blue dotted
lines), such deviation, as well as the number of EVs in the peak hours
in 17:00$-$19:00, is substantially reduced, resulting in waiting time
and traveling time reduction for the timeframes. The waiting time and
the traveling time in the peak hours is further reduced by the price
mechanism of the CRM (red lines), where the EVs are shifted to those
timeframes with fewer purchased CRs. The maximum average waiting time
(378.2 s) in the CRM appears in 15:00$-$17:00, with the maximum CR price of \$ 0.26 /kWh in the same timeframe. This is due to the relatively low electricity price in the timeframe and the shifted EVs from other timeframes, but the average waiting time is reduced by 75.3\% and 39.0\% compared to the NENP scheme and the TENP scheme, respectively. 

The total time spent and costs induced in the typical-day scenario are listed in Table~\ref{tab-total-cost}. The total charging time is approximately 900 h in the three schemes, as the average charging time is 0.5 h. As the average coefficient for time cost is set as \$10/h, the induced time costs are approximately the corresponding time spent multiplying 10. Compared to the NENP scheme, the waiting time cost is reduced by 42.0\% and 69.7\% in the TENP and CRM, respectively. The reduction in traveling time is less significant at 1.2\% and 4.2\% in the TENP and CRM, respectively. Sacrifices include the increased SD cost, as more EVs deviate from their preferred timeframe for charging due to the high prices, and the increased electricity fees. The latter is because the price of CR makes some low-electricity-price timeframes less attractive. Nevertheless, the total cost in the CRM is reduced by \$ 854.5, compared to the NENP scheme. Although the reduction seems small compared to the total electricity fees and the charging time cost, it accounts for 52.0\% of the waiting time cost (\$ 1643.5). Intuitively, the time costs can be further reduced considering CR trading among EV users, because users who are more time-sensitive can purchase CRs from others instead of paying the high CR price in peak hours or deviating from their preferred timeframes for charging.

In the CRM, the accumulated CR payment (\$ 521.7) is close to the waiting time cost (\$ 501.7), which indicates that the price of CRs can well reflect the waiting time cost. In addition, the CR payment is much smaller than the total electricity fees (\$ 16732.0), although the marginal CR price can approach the electricity price in certain timeframes (as shown in Fig.~\ref{chap4-f5}(b)). In Table~\ref{tab-total-cost}, CR payment is not considered in the total cost, because, as mentioned above, the money can be used to compensate for electricity fees or to motivate EVs to avoid hours with peak charging demand, although the detailed method is beyond the scope of this paper.

\section{Conclusion}\label{sec_conclusion}
In this article, the concepts of EV CRs and a CRM are proposed. By purchasing CRs and participating in the CRM, EVs can reduce their charging waiting time and better plan their trip. The external-cost-based CR pricing can guide EVs toward optimized charging behaviors, greatly reducing the time cost in the charging system. Real-time EVCS queue states are utilized in the CRM, which can meaningfully reduce the average waiting time, and the impact is considered in our waiting time modeling. Instead of relying on forecasts, the statistics of the CRs are used to provide waiting time estimation and charging pricing in advance. The case study verified the proposed framework for waiting time estimation and the effect of the CRM on reducing time cost in EV charging.

This article focuses on reducing time cost in PFC charging systems,
and the electricity price is assumed to be static. In the future, the proposed dynamic and autonomous pricing mechanism can be extended to consider the interdependence of transportation networks and PDNs. Furthermore, the premise in Corollary \ref{corollary_equal} may not hold if there are distant EVCSs. Although verified in a charging system where the EVCSs are up to several kilometers apart, our method may need modification for applying in larger charging systems. These will be further studied in future works.

\appendices
\section{Proof of Theorem\ref{theorem_equal}}\label{theorem_equal_app}
\begin{proof}
  The queue state of the whole charging system is denoted as ${\rm queue} = \{t^{\rm{f}}_i|i = 1, ... , I\}$, where $t^{\rm{f}}_i$ and $I$ are the time for EV $i$ to finish charging (according to the ascending order of $t^{\rm{f}}_i$) and the number of arrived EVs in the system, respectively. Let the initial queue state of the OS and the IS be equal: ${\rm queue(OS)} = {\rm queue(IS)}$. Their queue states only change when an EV arrives at the system or finishes charging:

  When EV $I + 1$ with $e_{I + 1}$ to charge arrives in the OS at
  $t^{\rm{a}}_{I + 1}$, the same EV $I + 1$ with $e_{I + 1}$ to charge also
  arrives in the IS at $t^{\rm{a}}_{I + 1}$, according to our definition of
  the IS. If the queue states of the OS and the IS are equal, there
  are two possible circumstances: (1) If $I < c$, there must be an
  available charging pile in both the OS and the IS, and therefore, we have $t^{\rm{f}}_{I + 1} = t^{\rm{a}}_{I + 1} + e_{I + 1}/r$ for both systems. For the OS and the IS, add $t^{\rm{f}}_{I + 1}$ to queue, and their queue states are still equal. (2) If $I \ge c$, all piles have been occupied in the two systems. In the IS, EV $I + 1$ has to wait for an available pile until EV $I - c +1$ finishes charging at $t^{\rm{avail}}(IS) = t^{\rm{f}}_{I - c +1}$. In the OS, EV $I + 1$ arrives at EVCS $k^*$ with the earliest available time, i.e., $k^* = {\rm argmin} \ t^{\rm{avail}}(k)$. Note that the earliest available time among the EVCSs is also the available time of the whole OS, which is $t^{\rm{avail}}(OS) = t^{\rm{f}}_{I - c +1}$. Therefore, for both systems, EV $I + 1$ waits for $t^{\rm{w}}_{I+1} = t^{\rm{f}}_{I - c +1} - t^{\rm{a}}_{I + 1}$ and finishes charging at $t^{\rm{f}}_{I + 1} = t^{\rm{f}}_{I - c +1} + e_{I + 1}/r$. Add $t^{\rm{f}}_{I + 1}$ to queue, and the queue states of the OS and the IS are still equal.

  If the queue states of the OS and the IS are equal, we have
  $t^{\rm{f}}_i(OS) = t^{\rm{f}}_i(IS), i = 1,...,I$. Therefore, when EV $i$
  finishes charging in the OS, EV $i$ in the IS also finishes charging. Remove $t^{\rm{f}}_i$ from queue, and their queue states remain equal.
   
  Thus, the queue states of the OS and IS are always equal since the
  initial time, and the waiting times of the EVs in the two systems are
  also equal. Therefore, the average waiting time of the OS, which is
  independent of the initial state, is equal to that of the IS.
\end{proof}

\section{Proof of Corollary\ref{corollary_equal}}\label{corollary_equal_app}
\begin{proof}
  In the proof of Theorem \ref{theorem_equal}, consider the second
  circumstance ($I \ge c$) when EV $I + 1$ with $e_{I + 1}$ to charge
  arrives in the two systems at $t^{\rm{a}}_{I + 1}$. In the OS, EV $I + 1$
  arrives at EVCS $k^*$ with the minimum total time, i.e., $k^* = {\rm
    argmin} \ t^{\rm{w}}(k) + t^{\rm{oc}}(k) + t^{\rm{cd}}(k)$. Since the difference in
  traveling time $t^{\rm{oc}}(k) + t^{\rm{cd}}(k)$ of different EVCSs is
  negligible compared to the difference in waiting time $t^{\rm{w}}(k)$, we
  have $k^* = {\rm argmin} \ t^{\rm{w}}(k)$. Note that $t^{\rm{w}}(k) = t^{\rm{avail}}(k)
  - t^{\rm{a}}_{I + 1}$ and $t^{\rm{a}}_{I + 1}$ is given; thus, $k^* = {\rm argmin}
  \ t^{\rm{avail}}(k)$, which is the same as in Theorem
  \ref{theorem_equal}. Therefore, follow the proof in
  Appendix.\ref{theorem_equal_app}, and the average waiting time of
  the OS is equal to that of the IS.
\end{proof}

\bibliographystyle{IEEEtran}
\bibliography{reference}

\end{document}